\newtheorem{definition}{Definition}
\newtheorem{proposition}{Proposition}
\newcommand{\nozero}{\backslash \left\{0\right\}}
\newcommand{\diag}[1]{\text{diag}\left( #1 \right)}
\newcommand{\vol}{\mbox{vol}}
\newcommand{\xx}{\mathbf{x}}
\newcommand{\yy}{\mathbf{y}}
\newcommand{\ww}{\mathbf{w}}
\newcommand{\MM}{\mathbf{M}}
\newcommand{\HH}{\mathbf{H}}
\newcommand{\EE}{\mathbf{E}}
\newcommand{\DD}{\mathbf{D}}
\newcommand{\II}{\mathbf{I}}
\newtheorem{lemma}{Lemma}
\begin{document}
	
	\title{Multilevel Code Construction for \\ Compound Fading Channels}
	
	\author{\IEEEauthorblockN{Antonio Campello, Ling Liu and Cong Ling} \IEEEauthorblockA{Department of Electrical and Electronic Engineering\\
			Imperial College London, U.K.\\
			$[$a.campello,l.liu12,c.ling$]$@imperial.ac.uk}
		}
	
	\maketitle
	\begin{abstract}
 We consider explicit constructions of multi-level lattice codes that universally approach the capacity of the compound block-fading channel. Specifically, building on algebraic partitions of lattices, we show how to construct codes with negligible probability of error for any channel realization and normalized log-density approaching the Poltyrev limit. Capacity analyses and numerical results on the achievable rates for each partition level are provided. The proposed codes have several enjoyable properties such as constructiveness and good decoding complexity, as compared to random one-level codes. Numerical results for finite-dimensional multi-level lattices based on polar codes are exhibited.
	\end{abstract}
	\section{Introduction}
	Compound channels are suitable models for open-loop communication scenarios, when a transmitter has limited knowledge of the channel state, but requires reliable transmission for all states. Another natural application is broadcasting a message to multiple receivers, where each transmission link is possibly in a different channel state. A code for a compound channel is \textit{universal}, in the sense that the probability of error vanishes for any channel in a set, known as the \textit{compound set}. The objective of this work is to study practical multi-level constructions of universal codes for block-fading channels. 
	
	It was previously shown in \cite{CLB16} that random algebraic lattice codes achieve the capacity of the compound block-fading (and MIMO) channels. However, as expected, random codes lack encoding and decoding efficiency. To deal with this problem, we employ in this paper the concept of \textit{multilevel codes}, or multi-level lattices, as introduced by Forney et al \cite{Forney00}. Multi-level lattices can be obtained by concatenating a low dimensional chain of lattices with efficient nested codes. In particular, we show that chains of algebraic lattices (e.g. \cite{GBB97})  may be used for practical construction of efficient universal codes, with essentially the same complexity as codes for the Gaussian channel (sections \ref{sec:twoLevel} and \ref{sec:consD}). Leveraging from the algebraic structure, we can identify the best and worst-channels in the compound set. As an illustration, we analyze the behavior of multi-level lattices based on \textit{polar codes} tailored for two extreme channels in each level, as well as a simple scheme based on a surrogate BEC channel in each level (Section \ref{sec:numerical}). Both schemes operate at within a fraction of dB from the limit with small probability of error for all channels and multi-stage decoding.
	
	\subsection*{Related Results}
	
	Traditional random $\mathbb{Z}$-lattice codes as in \cite{ZamirBook} can indeed achieve the compound capacity of the power constrained block-fading channels, with no extra algebraic structure. This is a consequence of a theorem in \cite{HeshamElGamal04} along with a compactness argument (see also \cite[Appx. A]{ShiWesel07}, \cite[Thm. 2]{CLB16} for more details). The integer-forcing architecture \cite{Ordentlich15} provides an effective solution for finding codes that operate at a gap to capacity in the broader class of compound MIMO channels by decoupling equalization and decoding. Algebraic random lattices offers an alternative in the fading case with several advantages, such as guaranteed full-diversity, better quantization of the channel space, while also admitting a decoupled implementation. 

	With respect to practical multi-level constructions, the advantages of algebraic lattices are even more evident. For instance, for integer partitions the achievable rates in each level vary significantly with the channel state (see, e.g., Fig. \ref{fig:OK2OK}), implying a very poor compound capacity. Algebraic partitions, on the other hand, ``absorb'' part of channel realizations, making the capacity of each level vary smoothly enough in the channel space.
	
	Regarding previous algebraic constructions, the work \cite{BB16} studied one-level chains of lattices with full diversity and derived bounds on the alphabet size of the underlying codes. For independent fading channels, \cite{LL16} showed that multi-level lattices constructed from nested polar codes via integer partitions $\mathbb{Z}/2\mathbb{Z}/\cdots$ are capacity achieving with essentially the same decoding and encoding complexity as the AWGN channel. In a recent contribution \cite{BBG16}, the outage probability of two-dimensional multi-level lattices based on Reed-Muller codes was numerically evaluated. 
	In a related discrete-model, a novel hierarchic polar coding scheme using two phases of polarization was proposed in \cite{SOS} for binary-input block fading channels. Based on this scheme, a multi-level structure was then proposed for fading channels with additive exponential noise, using a noise decomposition technique.
		\section{Notation and Initial Definitions}
	\label{sec:notation}
	Following \cite{ViterboOggier}, we define the diversity of a set as follows.
	\begin{definition}For two vectors $\xx, \yy \in \mathbb{R}^n$ define  $\mathcal{I}_{\xx,\yy} = \left\{ i \in \left\{1,\ldots, n\right\} : x_i \neq y_i \right\}$. The \textit{diversity order} of a set $\Lambda \subset \mathbb{R}^n$ is given by $l =\displaystyle \min_{{\xx \neq \yy} {\xx,\yy\in\Lambda}} |\mathcal{I}_{\xx,\yy}|.$When $l = n$, we say that $\Lambda$ has \textit{full-diversity}.
	\end{definition}
	Let $\Lambda$ be a lattice in $\mathbb{R}^n$. If $\Lambda$ has full diversity, it has an associated positive \textit{product-distance} defined by $d_{\text{prod}}(\Lambda) = \min_{\xx \in \Lambda \nozero} \prod_{i=1}^n |x_i|.$	
	A \textit{fundamental region} $\mathcal{R}(\Lambda)$ for $\Lambda$, is any region whose translates by vectors of $\Lambda$ tile $\mathbb{R}^n$ . For any $\yy, \xx \in \mathbb{R}^n$ we say that $\yy = \xx \pmod \Lambda$ iff $\yy - \xx \in \Lambda$. By convention, we fix a fundamental region and denote by $\yy \pmod \Lambda$ the unique representative $\xx \in \mathcal{R}(\Lambda)$ such that $\yy = \xx \pmod \Lambda$. A \textit{partition chain} $\Lambda_1/\Lambda_2/\Lambda_3/\ldots/\Lambda_m$ is a lattice sequence such that $\Lambda_m \subset \Lambda_{m-1} \subset \ldots \subset \Lambda_1$.
			The \textit{volume-to-noise} ratio (VNR) of an $n$-dimensional lattice $\Lambda$ with respect to $\sigma>0$ is defined as $\gamma_{\Lambda}(\sigma) = {V(\Lambda)^{2/n}}/{\sigma^2}.$

	We write $\MM = \diag{m_1,\ldots,m_n}$ for a diagonal matrix with diagonal elements $m_i$. We say that $\MM_1 \succ \MM_2$ if the difference $\MM_1 - \MM_2$ is positive definite. We denote the $n \times n$ identity matrix by $\II_n$ or simple $\II$ when there is no ambiguity.
	\subsection{Model Description and Fundamental Limits}
	In this paper we focus on infinite-dimensional constellations, for which the Poltyrev limit replaces the notion of capacity. Let $\HH$ be a diagonal matrix in $\mathbb{R}^{n \times n}$. For $\xx \in \Lambda$ the received signal is given by
	\begin{equation}
	\yy = \HH \xx + \ww,
	\end{equation} 
	where $\ww \sim \mathcal{N}(0,\sigma^2\II_n)$. The matrix $\HH$ is assumed to be constant in the whole transmission and known to the receiver. After $T$ transmissions, the channel equation can be written in matrix or vectorized forms, respectively:
	\begin{equation}\label{eq:block-fading}
	\underbrace{\mathbf{Y}}_{n\times T}=\underbrace{\mathbf{H}}_{n\times n}\underbrace{\mathbf{X}}_{n \times T}+\underbrace{\mathbf{W}}_{n\times T} \mbox{ or } 
	\underbrace{\mathbf{y}}_{nT \times 1}=\underbrace{\mathcal{H}}_{nT \times nT}\underbrace{\mathbf{x}}_{nT \times 1}+\underbrace{\mathbf{w}}_{nT\times1},
	\end{equation}
	where $\mathcal{H} = \mathbf{I}_{T} \otimes \mathbf{H}$. For a real number $\mathcal{D} \geq 1$, let
	\begin{equation*}
	\mathbb{H}_{\infty}(\mathcal{D}) = \left\{\HH \in \diag{\mathbb{R}^{n\times n}}: \det \HH^{\text{tr}}\HH = \mathcal{D} \right\}.
	\label{eq:compoundModel}
	\end{equation*}
	By convention, we will denote $\mathbb{H}_{\infty}(1)$ by $\mathbb{H}_{\infty}$. In the ``infinite constellation setting'', where any point of a lattice can be transmitted, the notion of normalized log-density replaces the usual rate. For the block-fading model with blocks of size $n$ and $T$ transmissions we define the log density of a lattice $\Lambda_T \subset \mathbb{R}^{nT}$ as $(1/T) \log V(\Lambda).$ The smallest possible log density achievable by a sequence of lattices constellation with vanishing probability of error is called the \textit{Poltyrev limit}.
	
	Consider a sequence of lattices $\Lambda_1,\ldots,\Lambda_T,\ldots$ of increasing dimension $nT$. We say that it achieves the Poltyrev limit if
	\begin{equation}
		\frac{1}{T} \log V(\Lambda_T) \to \frac{n}{2}\log\left(2\pi e \sigma^2 \mathcal{D}^{2/n}\right),
		\label{eq:PoltyrevLimit}
	\end{equation}
	and the $P_e(\Lambda_T,\HH) \to 0$ vanishes as $T \to \infty$.
	Furthermore, if this sequence has vanishing probability of error \textit{for any} $\HH \in \mathbb{H}_{\infty}(\mathcal{D})$ we say that it is \textit{universal}. The above discussion gives rise to the notion of goodness for the block-fading channel. 
	\begin{definition}\cite{CLB16} We say that a sequence of lattices $\Lambda_T$ of increasing dimension $nT$ is universally good for the block-fading channel if for any VNR $\gamma_{\Lambda_T}(\sigma) > \frac{2 \pi e}{\mathcal{D}^{2/n}}$ and all $\mathbf{H} \in \mathbb{H}_\infty(\mathcal{D})$, $ P_e(\Lambda_T,\mathbf{H}) \to 0$.
		\label{def:goodLattices}
	\end{definition}
	The objective of the rest of the paper is to construct lattices which are good for the block-fading channel from multilevel constructions.
	\section{One-Level Universal Lattices}
	\label{sec:twoLevel}
	Known schemes of universally good lattices use one-level lattices based on random codes with increasing alphabet size. In order to build partitions with more levels for the compound channel, we start by revisiting the one-level construction \textit{\`a la} Forney et al. \cite{Forney00}. This construction has the advantage of being explicit and efficiently (two-stage) decodable, as compared to random lattices. 
	
	Let $\Lambda/\Lambda^\prime$ be a lattice partition with quotient $|\Lambda/\Lambda^\prime|=p$, for a prime or prime-power $p$.  We construct the lattice 
	\begin{equation} \mathcal{L} = (\Lambda^{\prime})^T + \mathcal{C},
	\label{eq:L}
	\end{equation}
	where $\mathcal{C}$ is a code in $(\Lambda/\Lambda^\prime)^T$. For example, in applications to the AWGN channel, $\Lambda$ is usually the one-dimensional lattice and $\Lambda^\prime = p\mathbb{Z}$, in which case the quotient $\mathbb{Z}/ p\mathbb{Z} \simeq \mathbb{F}_p$ and we can take the code $\mathcal{C}$ to be a $p$-ary code. In order to analyze the goodness of the construction \eqref{eq:L} in the block-fading case we will define the mod-$\Lambda$ block-fading channel and the $\HH\Lambda/\HH\Lambda^\prime$ compound channel. To summarize we will show that the following ``recipe'' produces a lattice $\mathcal{L}$ (Eq. \eqref{eq:L}) which is universally good:
	
		\begin{itemize}
		\item Choose $\Lambda/\Lambda^\prime$ to be a \textit{full}-diversity partition chain of dimension $n$, the size of each block $\HH$.
		\item Choose $\mathcal{C}$ to be an universally capacity achieving code for the set of $\HH\Lambda/\HH\Lambda^\prime$-channels.
	\end{itemize}

	\subsection{The mod-$\Lambda$ block-fading channel}
	\begin{figure}[!htb]
		\centering
		\includegraphics[scale=0.3]{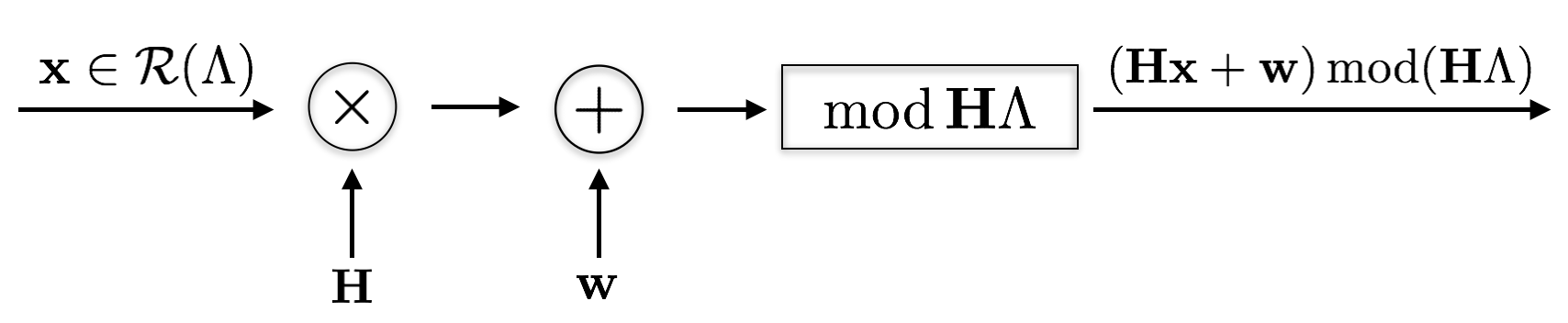}
		\caption{Diagram of the mod $\HH\Lambda$ channel}
	\end{figure}
	
	Given a vector $\xx \in \mathcal{R}(\Lambda)$ the block-fading mod-$\Lambda$ channel is a block-fading channel with the operation modulo $\HH \Lambda$ performed at the receiver front-end, namely
	$$\yy = (\HH \xx +\mathbf{w}) \pmod {\HH \Lambda}.$$
	where $\mathbf{H}$ is diagonal and belongs to $\mathbb{H}_{\infty}$, and $\ww \sim \mathcal{N}(0,\sigma^2 \II_n)$ as usual. Notice that the dimension of the lattices $\Lambda$ and $\Lambda^\prime$ is equal to block-size (number of rows of $\HH$).
	
	The $\HH \Lambda/ \HH \Lambda'$-channel is a modulo $\HH \Lambda^\prime$ channel with inputs constrained to be in $\Lambda$. Denote by $C_{\HH}(\Lambda,\sigma^2)$ the capacity of the mod-$\HH \Lambda$ channel and by $C_{\HH}(\Lambda/\Lambda^\prime,\sigma^2)$ the capacity of the $\HH \Lambda/\HH \Lambda^\prime$ channel, given a realization $\HH$. When $\HH = \mathbf{I}$ we use the shorthand notation $C_{\II}(\Lambda/\Lambda^\prime,\sigma^2) =C(\Lambda/\Lambda^\prime,\sigma^2)$. 
	
	We next adapt some results from multi-level partition to our block-fading model. A detailed analysis and proof of these results can be found in \cite{Forney00}, especially sections III and VI.
	First of all, we have the separability property
	\begin{equation} C_{\HH}(\Lambda/\Lambda^\prime,\sigma^2) = C_{\HH}(\Lambda^\prime,\sigma^2) - C_{\HH}(\Lambda,\sigma^2).
	\label{eq:conservationRule}
	\end{equation}
	For a one-level partition $\Lambda/\Lambda'$, where $\Lambda$ and $\Lambda'$ have decreasing and increasing volume, respectively, the capacity of the $\HH \Lambda/\HH \Lambda'$ channel tends to the Poltyrev limit, i.e.:
	\begin{equation}
		\lim_{{V(\Lambda') \to \infty} \above 0pt {V(\Lambda) \to 0} } \frac{C_{\HH}(\Lambda',\sigma^2)}{(n/2)\log(\gamma_{\Lambda^\prime}(\sigma)/2\pi e )} = 1.
		\label{eq:limitPoltyrev}
	\end{equation}
	For any \textit{fixed} $\HH$, the $\HH\Lambda/\HH\Lambda^\prime$ is \textit{regular}, i.e., the uniform input distribution is capacity achieving. The set of all $\HH \Lambda/\HH \Lambda^\prime$ channels, where $\HH \in \mathbb{H}_{\infty}(\mathcal{D})$, is therefore a \textit{compound channel} with capacity.
	\begin{equation} C_{\mathbb{H}_{\infty}}(\mathcal{D})(\Lambda/\Lambda^\prime,\sigma^2) = \inf_{\HH \in \mathbb{H}_\infty(\mathcal{D})} C_{\HH}(\Lambda/\Lambda^\prime,\sigma^2).
	\label{eq:capacityCompoundLevel}
	\end{equation}
	\subsection{Ideal lattices}
	Let $\mathcal{C}$ be a $p$-ary code for the $\HH\Lambda/\HH\Lambda^{\prime}$ channel and use Construction A. Upon receiving vector $\mathbf{y}=[\mathbf{y}_1, \mathbf{y}_2, \ldots, \mathbf{y}_T]$ where $\mathbf{y}_t \in \mathbb{R}^n$, the two-stage decoding rules consists of two stages:
	
	\begin{enumerate}
		\item Decode $\bar{\mathbf{y}}$, $\bar{\mathbf{y}}_i = \mathbf{y}_i \text{ mod } \mathbf{H}\Lambda^{\prime}$, to code $\mathcal{C}$, outputting $\mathbf{c}$.
		\item Decode $\bar{\mathbf{y}} - \mathbf{c}$ to $\HH\Lambda^\prime$
	\end{enumerate}
	
	In the second stage, we require $P_e(\Lambda^{\prime},\mathbf{H}) \approx 0$ \textit{for all} $\mathbf{H}$ with fixed determinant $\mathcal{D}$. This can be satisfied by a bounded minimum Euclidean distance decoder, provided that $\Lambda^{\prime}$ has full diversity.
	\begin{equation}
	\begin{split}
	& \min_{\mathbf{H}: |\mathbf{H}|=\mathcal{D}}d_{\min}^2(\mathbf{H}\Lambda^{\prime}) = \min_{\mathbf{H}: |\mathbf{H}|=\mathcal{D}}\min_{\mathbf{x}\in \Lambda^{\prime},\mathbf{x}\neq\mathbf{0}}\|\mathbf{Hx}\|^2 \\ &=\min_{\mathbf{x}\in \Lambda^{\prime},\mathbf{x}\neq\mathbf{0}} n \mathcal{D}^{2/n} (x_1 \cdots x_n)^{2/n} = n \mathcal{D}^{2/n} {d}_{\text{prod}}(\Lambda^{\prime})^{2/n}.
	\end{split}
	\end{equation}
	The second step follows from the inequality of arithmetic and geometric means; amazingly, the minimum distance is precisely known here, i.e., there exists a realization  $\mathbf{H}$ such that the equality holds (this realization is $h_i^2=\frac{(\mathcal{D}d_{prod}(\Lambda^{\prime}))^{2/n}}{x_i^2}$ where $x_1 \cdots x_n=d_{prod}(\Lambda^{\prime})$; see e.g. \cite[Prop. 3.6]{DBLP:journals/corr/LuzziV15}). Therefore, as long as the norm $d_{\text{prod}}$ is sufficiently large, reliable decoding of $\mathbf{H}\Lambda^{\prime}$ is possible for all $\mathbf{H}$. This requires the property that $\Lambda^\prime$ has \textit{full-diversity}. More precisely, the worst-case probability can be bounded as follows:
	
	\begin{proposition} With notation as above, as $d_{prod}(\Lambda^{\prime}) \to \infty$,
		\begin{equation}
		\begin{split}
		\sup_{\HH \in \mathbb{H}_{\infty}(\mathcal{D})} P_{e}(\HH\Lambda^{\prime}) \leq &P\left(\left\|\ww \right\| \geq \frac{(\sqrt{n} \mathcal{D} d_{\text{prod}}(\Lambda^{\prime}))^{\frac{1}{n}}}{2}\right) \to 0.\\
		\end{split}
		\label{eq:universalIdeal}
		\end{equation}
		\label{prop:1}
	\end{proposition}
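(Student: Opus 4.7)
The plan is to prove \eqref{eq:universalIdeal} by combining a minimum-distance decoder, a uniform lower bound on the minimum distance of $\HH\Lambda'$, and concentration of the Gaussian noise.

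First, fix $\HH\in\mathbb{H}_\infty(\mathcal{D})$ and view $\yy = \HH\xx + \ww$ as a standard AWGN channel whose signal set is the lattice $\HH\Lambda'$. Apply a nearest-lattice-point decoder: a decoding error forces $\ww$ to move the received point out of the correct Voronoi region, which by the triangle inequality requires $\|\ww\| \geq d_{\min}(\HH\Lambda')/2$. Hence, for this fixed $\HH$,
\[
P_e(\HH\Lambda') \leq P\bigl(\|\ww\| \geq d_{\min}(\HH\Lambda')/2\bigr).
\]

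Second, I reuse the AM--GM computation displayed immediately before the proposition. For every $\HH$ with $\det(\HH^{\mathrm{tr}}\HH) = \mathcal{D}$,
\[
d_{\min}^2(\HH\Lambda') \geq n\, \mathcal{D}^{2/n}\, d_{\text{prod}}(\Lambda')^{2/n}.
\]
The crucial feature is that this lower bound is independent of $\HH$, so substituting it into the previous display and then taking $\sup_{\HH\in\mathbb{H}_\infty(\mathcal{D})}$ yields precisely the right-hand side of \eqref{eq:universalIdeal}.

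Finally, for the vanishing of that right-hand side, note that $\|\ww\|^2/\sigma^2$ is a chi-square random variable with $n$ degrees of freedom whose distribution does not depend on $\Lambda'$, while the threshold grows as $d_{\text{prod}}(\Lambda')^{1/n}\to\infty$. Standard Gaussian (or Chernoff) tail bounds then give the convergence to zero.

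The only nontrivial ingredient is the uniform minimum-distance lower bound; the paper already supplies the AM--GM argument and observes that the bound is actually attained at some $\HH\in\mathbb{H}_\infty(\mathcal{D})$, so no channel realization escapes it. Beyond that, the argument is a routine noise-ball / bounded-distance decoding bound, and I do not anticipate substantive obstacles.
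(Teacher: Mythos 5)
Your proof is correct and follows essentially the same route as the paper: the paper's justification for Proposition~\ref{prop:1} is precisely the preceding AM--GM computation giving the $\HH$-independent bound $d_{\min}^2(\HH\Lambda')\geq n\,\mathcal{D}^{2/n}d_{\text{prod}}(\Lambda')^{2/n}$, combined with the bounded-minimum-Euclidean-distance decoder and the Gaussian tail. The only discrepancy is cosmetic: your (correct) threshold is $\tfrac{\sqrt{n}\,(\mathcal{D}\,d_{\text{prod}}(\Lambda'))^{1/n}}{2}$, whereas the proposition as printed places the $\sqrt{n}$ inside the $1/n$-th power, which appears to be a typo and does not affect the asymptotic claim.
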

	The right-hand side of Eq. \eqref{eq:universalIdeal} can be further bounded using $Q$-function bounds in order to estimate the product-distance for a target probability of error (see \cite{BB16} for more details).
	Proposition \ref{prop:1} strictly requires $\Lambda^\prime$ to have full diversity. This can be seen geometrically, for instance, if $\HH \in \mathbb{H}_{\infty}$ has some component arbitrarily close to zero. An explicit construction of full-diverse lattices is given by the generalized Construction A \cite{CLB16}, $\Lambda^{\prime}$ is the embedding in $\mathbb{R}^n$ of an ideal $\mathfrak{p}$ of the ring of integers of a number field $K$.

	In the first stage, the rate of the code $\mathcal{C}$ is bounded by the capacity $C_{\mathbf{H}}(\Lambda/\Lambda^{\prime}, \sigma^2)$ of the $\mathbf{H}\Lambda/\mathbf{H}\Lambda^{\prime}$ channel. In general, we do not know whether $C_{\mathbf{H}}(\Lambda/\Lambda^{\prime}, \sigma^2)$ is fixed, since $\mathbf{H}$ is arbitrary. However if $\Lambda$ has full diversity and we scale the bottom lattice $\Lambda$ appropriately we can make $C_{\mathbf{H}}(\Lambda,\sigma^2) \approx 0$ for any channel realization, so that in the limit \eqref{eq:limitPoltyrev}:
	\begin{equation}
	\begin{split}
	C_{\mathbf{H}}(\Lambda/\Lambda^{\prime}, \sigma^2) \approx 
\frac{n}{2}\log\left(\frac{(\mathcal{D}V(\Lambda^{\prime})^{2/n}}{2\pi e \sigma^2}\right)
	\label{eq:finalCapacity}
	\end{split}
	\end{equation}
	which only depends on $\mathcal{D}$. So it is possible to design a universal good lattice, provided that the code for the $\HH\Lambda/\HH\Lambda^{\prime}$ is universal. Examples include spatially-coupled LDPC codes or universal variants of polar codes \cite{KRU,SW,HU}.
	
	\section{Construction D}
	\label{sec:consD}
	In order to reduce the alphabet-size and improve decoding complexity, multilevel constructions are employed. We will then apply this construction in Section \ref{sec:numerical} in order to exhibit multi-level lattices based on nested polar codes as in \cite{PolarArXiv} which are good for block-fading channel. More specifically, we will employ multilevel partitions of \textit{algebraic} lattices as in \cite{GBB97}. Due to space limitations, we refer to \cite{ViterboOggier} for undefined algebraic terms.
	Assume a partition chain $\Lambda_0=\psi(\mathcal{O}_K), \Lambda_1=\psi(\mathfrak{p}_1), \ldots, \Lambda_m=\psi(\mathfrak{p}_r)$, where the $\mathfrak{p}_i$'s are nested ideals of a number field $\mathcal{O}_K$ of degree $n$. For a fixed $\HH$, iterating the capacity conservation rule \eqref{eq:conservationRule}, we obtain:
	\begin{equation}
	\sum_{i=1}^m C_{\mathbf{H}}(\Lambda_i/\Lambda_{i+1}, \sigma^2) = C_{\mathbf{H}}(\Lambda_0/\Lambda_{m}, \sigma^2),
	\end{equation}
	which is expected to be approximately the normalized log-density of the lattice. However we face two main problems
	
	\begin{enumerate} 
		\item Rate allocation: what is the maximum rate $R_i = C_{\mathbf{H}}(\Lambda_i/\Lambda_{i+1}, \sigma^2)$ for each level?
		\item Degradation: As we move along a lattice partition chain, the capacity of each level increases. However, in order to apply lattice construction from nested capacity-achieving polar codes of Section \ref{sec:numerical}, we need the stronger fact that such channels are stochastically degraded. Under which conditions can we guarantee degradation?
	\end{enumerate}
	The following useful result is adapted from \cite[Lem. 3]{PolarArXiv}:
	
	\begin{lemma}[Channel Degradation] If $|\DD| \succ \II$ is a diagonal matrix, the mod-$\Lambda$ channel is stochastically degraded with respect to the mod-$\DD\Lambda$ channel. Furthermore, if $\Lambda/\DD\Lambda/\DD^2\Lambda$ is a lattice partition chain, the $\Lambda/\DD\Lambda$-channel is stochastically degraded with respect to the $\DD\Lambda/\DD^2\Lambda$-channel.
		\label{lem:degradation}
	\end{lemma}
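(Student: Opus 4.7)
The plan is, in each part, to exhibit an explicit post-processing map that transforms the output of the ``better'' channel into a sample from the output distribution of the claimed degraded channel.

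For part~1, the key observation is that $\DD\Lambda$ is a sublattice of $\Lambda$, so the mod-$\Lambda$ reduction factors through the mod-$\DD\Lambda$ reduction: for every $\zz\in\mathbb{R}^n$, a representative of $\zz\pmod{\DD\Lambda}$ differs from $\zz$ by an element of $\DD\Lambda\subset\Lambda$, hence $\zz\pmod{\Lambda}=\bigl(\zz\pmod{\DD\Lambda}\bigr)\pmod{\Lambda}$. I would couple the two channels so they see a common input $\xx$ and a common noise $\ww$, producing $\yy_{\DD\Lambda}=(\xx+\ww)\pmod{\DD\Lambda}$ and $\yy_\Lambda=(\xx+\ww)\pmod{\Lambda}$; this identity then exhibits $\yy_\Lambda$ as a deterministic function of $\yy_{\DD\Lambda}$, which is a degenerate stochastic-degradation kernel.

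For part~2, I would use the $\mathbb{R}$-linear bijection $\uu\mapsto\DD\uu$ between coset representatives of $\Lambda/\DD\Lambda$ and of $\DD\Lambda/\DD^2\Lambda$, and construct the degrading channel in three steps. First, write the $\DD\Lambda/\DD^2\Lambda$-channel output corresponding to input $\DD\uu$ as $\yy'=(\DD\uu+\ww')\pmod{\DD^2\Lambda}$ with $\ww'\sim\mathcal{N}(\mathbf{0},\sigma^2\II)$. Second, apply $\DD^{-1}$ to obtain $\DD^{-1}\yy'=(\uu+\DD^{-1}\ww')\pmod{\DD\Lambda}$, which is already a $\Lambda/\DD\Lambda$-channel but with the reduced diagonal noise covariance $\sigma^2\DD^{-2}$. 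Third, add an independent Gaussian $\ww''\sim\mathcal{N}(\mathbf{0},\sigma^2(\II-\DD^{-2}))$; its covariance is positive semidefinite precisely because $|\DD|\succ\II$ forces $\II-\DD^{-2}\succ 0$. The aggregate noise $\DD^{-1}\ww'+\ww''$ is then $\mathcal{N}(\mathbf{0},\sigma^2\II)$, so the resulting output has the law of the $\Lambda/\DD\Lambda$-channel output given input $\uu$, which yields the required Markov chain.

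The one subtlety worth spelling out is step two: multiplication by $\DD^{-1}$ commutes with the modulo operation only after fundamental regions are chosen compatibly, e.g.\ $\mathcal{R}(\DD\Lambda)=\DD^{-1}\mathcal{R}(\DD^2\Lambda)$, so that $\DD^{-1}\bigl(\zz\pmod{\DD^2\Lambda}\bigr)=(\DD^{-1}\zz)\pmod{\DD\Lambda}$ holds as an identity of coset representatives. Once this bookkeeping is fixed, the remaining calculations are routine and the PSD condition on $\ww''$ is exactly the quantitative content of the hypothesis $|\DD|\succ\II$; I expect this modular-commutation verification to be the only substantive step of the argument.
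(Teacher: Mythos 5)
The paper does not prove this lemma itself (it cites \cite[Lem.~3]{PolarArXiv}), so I am comparing your argument to the standard one behind that citation, which is precisely the scale--then--add--compensating--Gaussian--noise argument you give for part~2. Your part~2 is essentially correct: the only detail to add is the final reduction, i.e.\ after adding $\ww''$ you must reduce modulo $\DD\Lambda$ again and invoke $(\mathbf{a}\bmod \DD\Lambda+\mathbf{b})\bmod \DD\Lambda=(\mathbf{a}+\mathbf{b})\bmod \DD\Lambda$ so that the aggregate noise $\DD^{-1}\ww'+\ww''\sim\mathcal{N}(\mathbf{0},\sigma^2\II)$ enters inside a single mod-$\DD\Lambda$ operation; and, as you say, the positive semidefiniteness of $\sigma^2(\II-\DD^{-2})$ is exactly where $|\DD|\succ\II$ is used.

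Part~1, however, has a genuine gap. Your argument rests on $\DD\Lambda$ being a sublattice of $\Lambda$, which does not follow from $|\DD|\succ\II$ and is not among the hypotheses: for $\Lambda=\mathbb{Z}^n$ and $\DD=1.5\,\II$ one has $\DD\Lambda\not\subset\Lambda$, and in the paper's own application of this part (the proof of Proposition~2, where the lemma is applied with the scaling matrix $e^{-\rho}\EE$ having arbitrary real entries in $[e^{-2\rho},1]$) the sublattice relation certainly fails. A telltale sign is that your part~1 argument never uses the hypothesis $|\DD|\succ\II$ at all, whereas the intended statement is false without it (for $|\DD|\prec\II$ the degradation reverses). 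What your coupling proves is a different fact --- degradation by coarsening the output to a coarser lattice, valid for any sublattice --- not the intended comparison of effective noise levels between a lattice and its scaled version. The correct proof of part~1 is the same three-step construction you already wrote for part~2: pair the input $\xx$ of the mod-$\Lambda$ channel with the input $\DD\xx$ of the mod-$\DD\Lambda$ channel, apply $\DD^{-1}$ to the latter's output to obtain $(\xx+\DD^{-1}\ww)\bmod\Lambda$ with shrunken noise covariance $\sigma^2\DD^{-2}\prec\sigma^2\II$, then add an independent $\mathcal{N}(\mathbf{0},\sigma^2(\II-\DD^{-2}))$ and reduce mod $\Lambda$. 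Part~2 is then the restriction of this statement to inputs lying in $\Lambda$ (respectively $\DD\Lambda$), which is where the partition-chain hypothesis $\Lambda\supset\DD\Lambda\supset\DD^2\Lambda$ is actually needed.
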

	
	\subsection{Rate Allocation}
	
	\begin{figure}[!htb]
		\centering
		\includegraphics[scale=0.30]{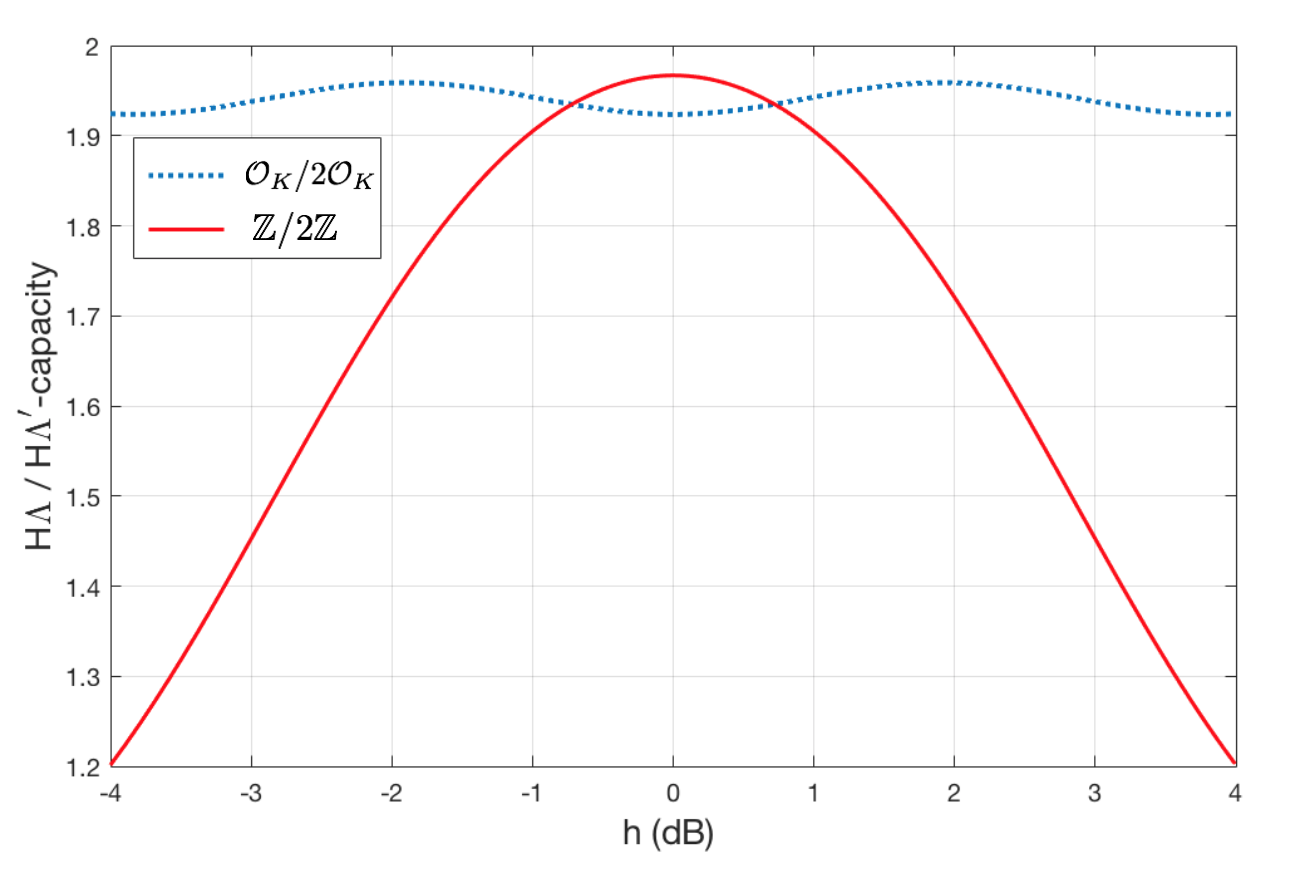}
		\caption{Capacity of the $\Lambda/\Lambda^{\prime}$ block-fading channel, as a function of the channel realization $\HH = \diag {h,1/h}$ for $(1/\sigma^2) = 10.5\text{dB}$.}
		\label{fig:OK2OK}
	\end{figure}
	For each $i$, the set of all symmetric $\HH \Lambda_{i}/\HH \Lambda_{i+1}$-channels, $\HH \in \mathbb{H}_{\infty}$, forms a compound channel with capacity given by \eqref{eq:capacityCompoundLevel}. Although we expect the sum of the capacities of the relevant levels to be equal to the normalized density \eqref{eq:finalCapacity}, the capacity of each level may depend on $\HH$. Partitions without additional algebraic structure may be very sensible to the variation of $\HH$, as illustrated in Fig. \ref{fig:OK2OK}. However, we show next that for algebraic partitions there are natural algebraic bounds for the compound capacity of each level. The bounds are of a theoretical interest. In practice, for low-degree ideal lattices we can we can identify precisely the best and worst channels in each compound set, by a careful algebraic analysis.
	
	\begin{proposition}
		Let $\Lambda$ be an ideal lattice of a totally real number field $K$ of degree $n$. For any $\HH \in \mathbb{H}_{\infty}$, the capacity of the mod-$\Lambda$ compound block fading channel is bounded as
		\begin{equation}
		C(\Lambda,(e^{\rho} \sigma)^2)\leq C_{\HH}(\Lambda,\sigma^2) \leq C(\Lambda,(e^{-\rho} \sigma)^2),
		\label{eq:boundCapacity}
		\end{equation}
		where $\rho$ is the covering radius of the log-unit lattice of $K$.
	\end{proposition}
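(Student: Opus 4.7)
The plan is to normalize $\HH$ modulo the group of canonical embeddings of units of $\mathcal{O}_K$, which act as symmetries of the ideal lattice $\Lambda=\psi(\mathfrak{a})$, and then to compare the resulting anisotropic mod-$\Lambda$ Gaussian channel with the two isotropic channels in \eqref{eq:boundCapacity} via a data-processing argument.

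First I would change variables $\tilde{\xx}=\HH\xx$ to convert the mod-$\HH\Lambda$ channel into the equivalent mod-$\Lambda$ channel with observation $\HH^{-1}\yy=\xx+\HH^{-1}\ww \pmod{\Lambda}$, whose noise is independent Gaussian with diagonal covariance $\sigma^2\HH^{-2}$. Next, for any unit $u\in\mathcal{O}_K^{\times}$ one has $u\mathfrak{a}=\mathfrak{a}$, and writing $\psi(u)=(\psi_1(u),\ldots,\psi_n(u))$ for its canonical embedding, this gives $\diag{\psi(u)}\Lambda=\Lambda$. Consequently $\HH\Lambda=\bigl(\HH\diag{\psi(u)}\bigr)\Lambda$, so the capacity $C_{\HH}(\Lambda,\sigma^2)$ is invariant under the substitution $h_i\mapsto h_i\psi_i(u)$ for every unit $u$.

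The key observation is that for $\HH\in\mathbb{H}_{\infty}$ one has $\sum_{i=1}^n \log|h_i|=0$, so the vector $(\log|h_1|,\ldots,\log|h_n|)$ lies in the hyperplane $\{v\in\mathbb{R}^n:\sum_i v_i=0\}$. For a totally real field of degree $n$, Dirichlet's unit theorem asserts that the log-unit lattice $\{(\log|\psi_1(u)|,\ldots,\log|\psi_n(u)|):u\in\mathcal{O}_K^{\times}\}$ is a full-rank lattice in this hyperplane with Euclidean covering radius $\rho$. Hence a unit $u$ can be chosen so that $\|(\log|h_i\psi_i(u)|)_i\|_2\le\rho$, which in particular yields $|h_i\psi_i(u)|\in[e^{-\rho},e^{\rho}]$ for every $i$. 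After this normalization the equivalent mod-$\Lambda$ channel has independent Gaussian noise with componentwise variances lying in the interval $[\sigma^2 e^{-2\rho},\sigma^2 e^{2\rho}]$.

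A standard stochastic-degradation argument then closes the proof. The mod-$\Lambda$ channel with isotropic noise of variance $(e^{\rho}\sigma)^2$ is obtained from the normalized anisotropic channel by adding independent Gaussian noise in each coordinate, giving the lower bound $C(\Lambda,(e^{\rho}\sigma)^2)\le C_{\HH}(\Lambda,\sigma^2)$; conversely, the anisotropic channel is itself degraded with respect to the isotropic channel with variance $(e^{-\rho}\sigma)^2$, yielding the upper bound in \eqref{eq:boundCapacity}. The main technical point is the unit-normalization step, which relies both on $\diag{\psi(u)}$ preserving $\Lambda$ (via $u\mathfrak{a}=\mathfrak{a}$) and on the Euclidean covering radius $\rho$ controlling the coordinate-wise deviations $|\log|h_i\psi_i(u)||$ uniformly in $i$, as needed for the variance comparison.
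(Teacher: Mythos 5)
Your proposal is correct and follows essentially the same route as the paper: the paper invokes a cited lemma asserting the existence of a diagonal $\EE$ with $|E_{ii}|\in[e^{-\rho},e^{\rho}]$ and $C_{\HH}(\Lambda,\sigma^2)=C_{\EE}(\Lambda,\sigma^2)$, which is exactly your unit-normalization step via the covering radius of the log-unit lattice, and then concludes by the same degradation sandwich against the $e^{\pm\rho}$-scaled channels. The only difference is that you unpack the proof of that normalization lemma (Dirichlet's unit theorem plus $u\mathfrak{a}=\mathfrak{a}$) rather than citing it.
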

	
	\begin{proof}
		From \cite[Lem. 4]{CLB16}, there exists a diagonal matrix $\EE$ such that 
			$|E_{ii}| \in \left[e^{-\rho}, e^{\rho} \right]$ and $C_{\HH}(\Lambda,\sigma^2) = C_{\EE}(\Lambda,\sigma^2)$.
		We claim that the capacity of the mod-$\EE\Lambda$ channel is nested between the capacity of the mod-$(e^{\pm \rho})\Lambda$ channels. For instance, for the upper bound, by scaling the channels, we only need to prove that the mod-$(e^{-\rho} \EE \Lambda)$ is less capable than the mod-$\Lambda$ channel. Since $\II \succ e^{-\rho} \EE $, this follows from Lem. \ref{lem:degradation}.
	\end{proof}
	
	\subsection{Degradation}
    We show how to construct chains that induce degraded channels using number fields.	Let $K/\mathbb{Q}$ be a Number Field and $p$ a rational prime. We consider two possibilities: 1) $p$ remains inert, 2) $p$ is totally ramified. In case 1), we have $p\mathcal{O}_K = \mathfrak{p}$, where $\mathfrak{p}$ is a prime ideal, and $\mathcal{O}_K \simeq \mathbb{F}_{p^n}$, and we obtain the algebraic partition $\mathcal{O}_K/p\mathcal{O}_K/p^2\mathcal{O}_K/\ldots$. Passing to $\mathbb{R}^n$ via the embeddings, we obtain $\mathbf{H} \psi(\mathcal{O}_K) / p \mathbf{H} \psi(\mathcal{O}_K) / p^2\mathbf{H} \psi(\mathcal{O}_K)/\ldots$. This partition is now self-similar and degradation follows from \cite[Lem. 3]{PolarArXiv}. The drawback is that we have to work with codes over $\mathbb{F}_{p^n}$, thus depending on the channel dimension the finite field will be very large. In case $2)$, $p\mathcal{O}_K = \mathfrak{p}^n$ and $\mathcal{O}_K/\mathfrak{p} \simeq \mathbb{F}_{p}$. The algebraic partition chain is then $\mathcal{O}_K/\mathfrak{p}/\mathfrak{p}^2/\ldots/\mathfrak{p}^{n-1}/p \mathcal{O}_K$. If, in addition $\mathfrak{p}$ is a principal ideal, then each $\mathfrak{p}^i = \left\langle \alpha^i\right\rangle$ where $\alpha \in \mathcal{O}_K$ is such that $\left\langle \alpha^n\right\rangle = p \mathcal{O}_K$. Therefore, passing to the real case, we have $\mathbf{H}  \psi(\mathcal{O}_K)/ \mathbf{H} \mathbf{D}_{\alpha} \psi(\mathcal{O}_K) / \mathbf{H}  \mathbf{D}_{\alpha}^2 \psi(\mathcal{O}_K)$, where $D_{\alpha}$ is a diagonal matrix with $[D_{\alpha}]_{ii} = \sigma_i(\alpha)$. Building ideals to ensure $|[D_{\alpha}]_{ii}| \succ I$, we can apply Lem. \ref{lem:degradation}.
	
	\subsection{Worked Out Examples}
	\textit{Binary Partitions.}
	Let $\mathbb{Z}[\sqrt{2}]= \left\{ a+b\sqrt{2} : a,b \in \mathbb{Z}\right\}$ be the ring of integers of the field extension $\mathbb{Q}(\sqrt{2})$. A simple example of binary partition is given by $\mathbb{Z}[\sqrt{2}]/\sqrt{2}\mathbb{Z}[\sqrt{2}]/2 \mathbb{Z}[\sqrt{2}]/\cdots$ and its $2^n$-dimensional generalization, which have been thoroughly analyzed in \cite{GBB97}, in terms of coding gain and product norm. Here we expand the analysis to the capacity of $\mod\HH\Lambda$ channels. More precisely, let 
	$$\psi(a+b\sqrt{2})=(a+b\sqrt{2},a-b\sqrt{2}),$$
	be the embedding of an algebraic number of the form $a+b\sqrt{2}$. Then we can construct the two-dimensional binary partition $\psi(\mathbb{Z}[\sqrt{2}])/ \psi(\sqrt{2} \mathbb{Z}[\sqrt{2}])/ \psi(2\mathbb{Z}[\sqrt{2}])/\ldots$. We can see from Lem. \ref{lem:degradation}, that the channels in each level are degraded with respect to the channel in the next level. For the rate allocation, we identify the worst channel. Let 
	$$C_{\HH}(\Lambda/\Lambda^\prime,\sigma) = h(\HH \Lambda, \sigma^2) - h(\HH\Lambda^\prime , \sigma^2) + 1$$
	be the capacity of the first level, where $\Lambda = \psi(\mathbb{Z}[\sqrt{2}] )$ and $\Lambda^\prime =\psi(\sqrt{2} \mathbb{Z}[\sqrt{2}] )$. For any $\HH \in \mathbb{H}_{\infty}$ denote by $C_{h}(\Lambda/\Lambda^\prime,\sigma)$ the capacity of the mod-$\Lambda$ fading channel for fading realization $(h,1/h)$. Due to the invariance of the partition by units in $\mathbb{Z}[\sqrt{2}]$, $C_{h}(\Lambda/\Lambda^\prime,\sigma)$ is a multiplicatively periodic function, i.e. $C_{h}(\Lambda/\Lambda^\prime,\sigma) = C_{(1\pm\sqrt{2})h}(\Lambda/\Lambda^\prime,\sigma), \mbox{ for any } h > 0.$
	From this, and from the fact that $C_{h}(\Lambda/\Lambda^\prime,\sigma) = C_{1/h}(\Lambda/\Lambda^\prime,\sigma)$, we can conclude that $h = 1$ and $h = \sqrt{1+\sqrt{2}}$ are extreme points of the capacity (see Fig. \ref{fig:capacity_periodic}). Interestingly, \cite{GBB97} have shown that these are also the extreme points of Hermite parameter of $\HH \Lambda$. The rate should be allocated according to the worst channel. We also observe a ``phase-transition'' in the worst-case channel in Fig. \ref{fig:capacity_periodic}: for low levels $h=\sqrt{1+\sqrt{2}}$ is the worst-case while $h=1$ becomes the best case as the level increases (the precise transition point is when $(\vol \,\, \Lambda_i)^{1/n} = 2 \pi e \sigma^2$, where $i$ represents the $i$-th level of the partition). The sums of the capacities of all levels is approximately the Poltyrev limit.
	
	\begin{figure}[!htb]
		\centering
		\includegraphics[scale=0.30]{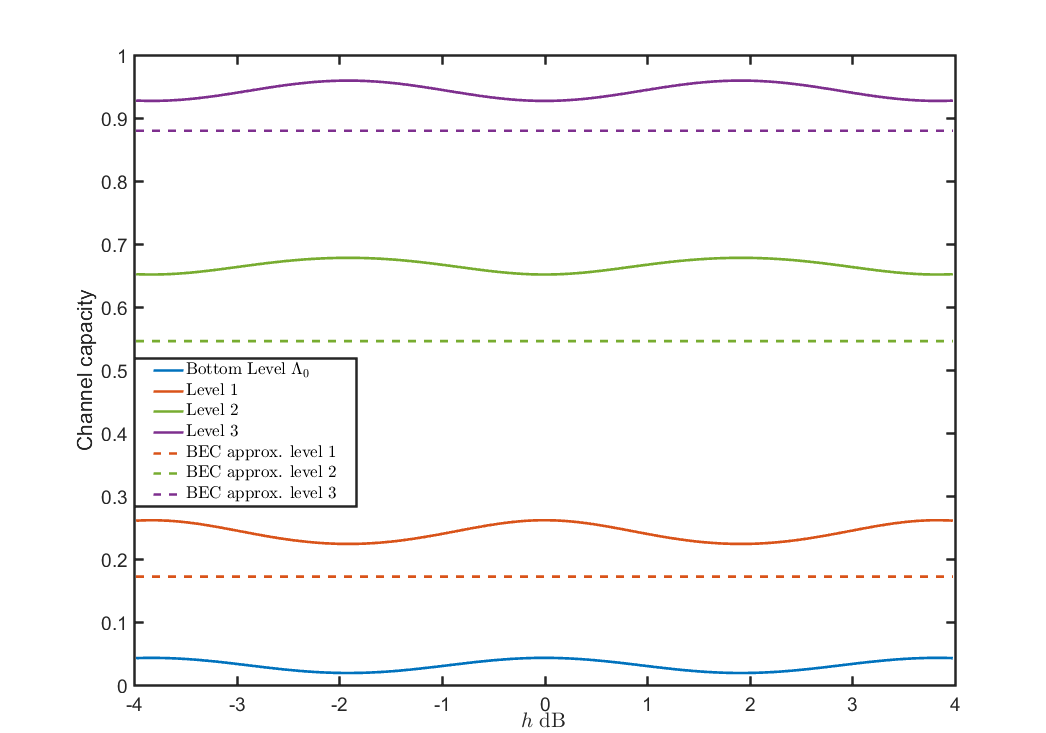}
		\caption{Periodic capacity functions of the partition channels. Note that to make $C_h(\mathbb{Z}[\sqrt{2}],\sigma)$ negligble, the partition chain is scaled with $\eta=0.5$.}
		\label{fig:capacity_periodic}
	\end{figure}

	\section{Numerical Results}
			\label{sec:numerical}
			\begin{figure}[!htb]
				\includegraphics[scale=0.15]{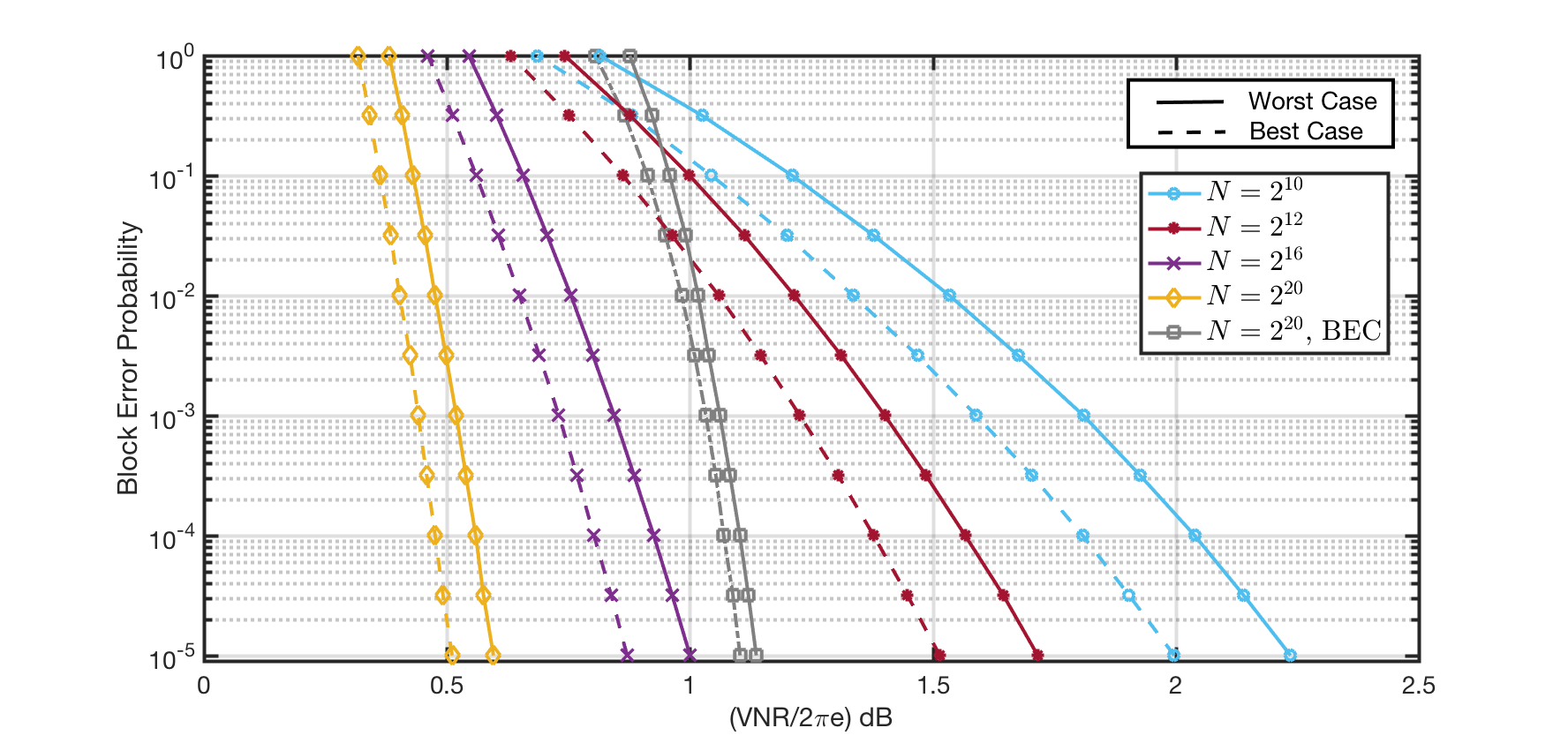}
				\caption{Performance of universal polar lattices in compound fading channels.}
				\label{fig:VNRperform}
			\end{figure}
			
	In this section, we design explicit universal polar lattices for compound fading channels, using universal polar codes \cite{HU} and construction D. Let $\Lambda = \mathbb{Z}[\sqrt{2}]$. To simplify the design, we choose the binary partition chain $\Lambda/\sqrt{2}\Lambda/2\Lambda/\ldots$ and binary universal polar codes. The compound fading channel is also assumed to have only two extreme cases with $h_1=1$ and $h_2=\sqrt{1+\sqrt{2}}$. As can be seen from Fig. \ref{fig:capacity_periodic}, the capacity $C_h(\mathbb{Z}[\sqrt{2}],\sigma)$ is negligible. Our aim is to design a universal polar code for each partition channel, with rate no more than $C_{h_2}(\Lambda/\sqrt{2}\Lambda,\sigma)=0.2239$, $C_{h_1}(\sqrt{2}\Lambda/2\Lambda,\sigma)=0.6516$, and $C_{h_1}(2\Lambda/2\sqrt{2}\Lambda,\sigma)=0.9271$, respectively. Since there are two fading states, we can employ the chaining construction \cite[Sec. V]{HU} to build a universal polar code for each level. Moreover, by Lem. \ref{lem:degradation} and the construction of polar lattices \cite{PolarArXiv}, we indeed generate a new polar lattice by chaining the two standard polar lattices corresponding to $h_1$ and $h_2$, respectively. The performance of the constructed codes is shown in Fig. \ref{fig:VNRperform}. The gap between worst and best cases can be as small as $0.1$dB, while the distance to the theoretical limit is kept as $0.6$dB, for a target probability of error of $10^{-5}$. Although strictly speaking the codes are provably guaranteed to work for the two extremes, we observe in simulations a negligible error probability for the channels in between.  This scheme can be easily generalized to compound channels with finite states (e.g. \cite[Sec. V-C]{HU}). 
	
	In the case of infinite states, we can use the universal polarization technique proposed in \cite{SW} to construct a universal polar code for each level. However, the polarization rate is slowed down due to the slow polarization stage. A more convenient construction is to use binary erasure channel approximation, i.e., to find a upper bound $\epsilon$ on the Bhattacharyya parameter of a partition channel, and construct a polar code for a surrogate BEC with erasure probability $\epsilon$. This polar code is universal (for \textit{all} channels within the set) because among all channels with a fixed Bhattacharyya parameter, the BEC's polarized descendants have the highest Bhattacharyya parameters. Unfortunately, we have to sacrifice some data rate by using this method. The maximum achievable rate by BEC approximation for each level is also shown in Fig. \ref{fig:capacity_periodic}, indicating that the rate loss is still acceptable. In this regime, we observe a loss of approximately $0.5$dB compared to the previous scheme. 
	\section{Conclusion}
	We have considered practical constructions and guidelines for the construction of universal multi-level lattice codes. Both theoretical bounds and numerical results were exhibited. Concretely, we illustrated the constructions with lattices from polar codes for the $2$-dimensional compound-fading channel, showing that a small gap to the Poltyrev limit can be universally achieved in all channels. On a qualitative level, our analyses corroborate the fact that algebraic partitions can be greatly advantageous as compared to unstructured ones, even from an information-theoretic perspective.
	
	\section*{Acknowledgments}
	The work of L. Liu is supported by Huawei's Shield Lab through the HIRP Flagship Program. The authors would like to thank J.-C. Belfiore for fruitful discussions.	
	\bibliographystyle{plain}
	\bibliography{block_fading}

\end{document}